\newtheorem{thm}{Theorem}\crefname{thm}{Theorem}{Theorems}
\newtheorem{lem}[thm]{Lemma}\crefname{lem}{Lemma}{Lemmas}
\crefname{prop}{Proposition}{Propositions}
\newtheorem{cor}[thm]{Corollary}\crefname{cor}{Corollary}{Corollaries}
\crefname{obs}{Observation}{Observations}
\crefname{figure}{Figure}{Figures}
\newtheorem*{thm*}{Theorem}
\theoremstyle{definition}
\newtheorem{dfn}[thm]{Definition}\crefname{def}{Definition}{Definitions}
\crefname{exa}{Example}{Examples}
\crefname{rmk}{Remark}{Remarks}
\DeclareMathOperator{\tr}{tr}
\DeclareMathOperator{\dist}{dist}
\DeclareMathOperator{\poly}{poly}
\DeclareMathOperator{\CNOT}{CNOT}
\DeclareMathOperator{\easy}{easy}
\DeclareMathOperator{\hard}{hard}
\DeclareMathOperator{\thresh}{thresh}
\DeclareMathOperator{\QEC}{EC}
\DeclareMathOperator{\bad}{bad}
\DeclareMathOperator{\good}{good}
\DeclareMathOperator{\out}{out}
\newcommand{\id}{\mathbbm{1}}
\newcommand{\CC}{\mathbb{C}}
\renewcommand{\O}{\mathcal{O}}
\renewcommand{\H}{\mathcal H}
\newcommand{\V}{\mathcal V}
\newcommand{\eps}{\varepsilon}
\newcommand{\N}{N}
\newcommand{\circuit}{\mathcal C}
\newcommand{\enc}{\mathcal E}
\newcommand{\dagg}{^\dagger}
\newcommand{\ket}[1]{|#1\rangle}
\newcommand{\bra}[1]{\langle #1|}
\newcommand{\proj}[1]{\ket{#1}\!\bra{#1}}
\newcommand{\braket}[1]{\langle #1\rangle}
\newcommand{\postBQP}{\textsf{postBQP}}
\newcommand{\PP}{\textsf{PP}}
\newcommand{\BQP}{\textsf{BQP}}
\newcommand{\nlev}{\textsc{Nev}}
\newcommand{\normalization}{\textsc{Norm}}
\newcommand{\sharpp}{\# \textsf{P}}
\DeclarePairedDelimiter{\norm}{\lVert}{\rVert}
\DeclarePairedDelimiter{\abs}{\lvert}{\rvert}
\tikzset{
	enc/.style={isosceles triangle,isosceles triangle apex angle=60,draw,fill=white,minimum size = 3mm, rotate=180},
	dec/.style={isosceles triangle,isosceles triangle apex angle=60,draw,fill=white,minimum size = 3mm, rotate=000},
}
\tikzset{
  qtrash/.pic={
    \draw (-0.4,0.8) -- (-0.3,0) -- (0.3,0) -- (0.4,0.8) -- cycle;
    
    \draw (-0.20,0.7) -- (-0.15,0.1);
    \draw (0,0.75) -- (0,0.05);
    \draw (0.20,0.7) -- (0.15,0.1);
    
    
    \begin{scope}[shift={(0,.8)}]
    \begin{scope}[rotate=-30]
      \draw (-0.45,0.30) rectangle (0.45,0.45); 
      \draw (-0.1,0.45) -- (-0.1,0.55) -- (0.1,0.55) -- (0.1,0.45);
    \end{scope}
    \end{scope}
      }
}
\newcommand{\measureBox}[1]{
	\begin{scope}[shift={(#1)}]
		\filldraw[fill=white] (-0.7,-0.6) rectangle (0.7,0.6);
		\draw (0.5,-.3) arc[start angle=0,end angle=180,radius=0.5];
		\draw[-latex] (0,-.2) -- (0.55,0.45);
	\end{scope}
}
\newcommand{\projectBox}[1]{
	\begin{scope}[shift={(#1)}]
		\filldraw[fill=white] (-0.7,-0.6) rectangle (0.7,0.6);
		\draw (0,0) node {$\proj{0}$};
	\end{scope}
}
\newcommand{\projectOneBox}[1]{
	\begin{scope}[shift={(#1)}]
		\filldraw[fill=white] (-0.7,-0.6) rectangle (0.7,0.6);
		\draw (0,0) node {$\proj{1}$};
	\end{scope}
}
\newcommand{\cnot}[2]{
	\fill (#1) circle (4pt);

	\begin{scope}[shift={(#2)}]
		\draw (0,0) circle (8pt);
		\draw (0,-8pt) -- (0,8pt); 
	\end{scope}

	\draw (#1) -- (#2);
}
\newcommand{\unencodedCircuit}{
	\draw[decorate,decoration={brace,amplitude=6pt}] (-.5,-.3)--(-.5,3.3) node[midway,xshift=-1.2cm,align=center,text width=1.5cm]{$n$ qubits};
	\draw (0, .6) node {$\vdots$};
	\draw (2.5, .6) node {$\vdots$};
	\draw (0, 0) node {$\ket0$};
	\draw (0, 1) node {$\ket0$};
	\draw (0, 2) node {$\ket0$};
	\draw (0, 3) node {$\ket0$};
	\draw (.5, 0)--(3,0);
	\draw (.5, 1)--(3,1);
	\draw (.5, 2)--(4,2);
	\draw (.5, 3)--(4,3);
	\node[draw,fill=white,minimum width = 1cm, minimum height=4cm,align=center] at (1.5,1.5) {$U$};
	\begin{scope}[scale=.7]
		\projectBox{5,4.3}
		\measureBox{5,2.85}
	\end{scope}
	\draw[dashed] (3,0)--(3.8,-.3);
	\draw[dashed] (3,1)--(3.7,-.4);
	\pic at (3.8,-1.3) {qtrash};
}
\newcommand{\encodedCircuit}{
	\begin{scope}[scale=.7]
		\node[fill=blue!20,rounded corners,minimum width=1cm,minimum height=8cm]  at (2,5) {};
		\draw (2,12) node [text width=3cm, align=center]{Encoding to $k$-concatenated QECC};
		\node[fill=blue!20,rounded corners,minimum width=4cm,minimum height=8cm]  at (6,5) {};
		\draw (6,-1.5) node [text width=6cm, align=center]{FT computation; poly$(n,m)$exp$(\O(k))$ gates};
		\node[fill=blue!20,rounded corners,minimum width=2.5cm,minimum height=5.5cm]  at (11.25,6.82) {};
		\draw (11.25,12) node [text width=4.5cm, align=center]{Decoding, projection, measurement};
		\draw[decorate,decoration={brace,amplitude=6pt}] (-.5,-.3)--(-.5,6.3) node[midway,xshift=-1.2cm,align=center,text width=1.5cm]{$n$ logical qubits};
		\draw[decorate,decoration={brace,amplitude=6pt}] (-.5,8-.3)--(-.5,10.3) node[midway,xshift=-1.2cm,align=center,text width=2cm]{$m-1$ additional projection qubits};
		\draw (0, 1.1) node {$\vdots$};
		\draw (2, 1.1) node {$\vdots$};
		\draw (7, 1.1) node {$\vdots$};
		\draw (0, 9.1) node {$\vdots$};
		\draw (2, 9.1) node {$\vdots$};
        \foreach \y in {0, 1, ..., 2}{
            \begin{scope}[shift={(0, 4*\y)}]
                \draw (0, 0) node {$\ket0$};
                \draw (0, 2) node {$\ket0$};
    
                \draw (.5, 0)--(2,0);
                \draw (.5, 2)--(2,2);
                \draw (1.9,-.05) -- (10.1,-.05);
                \draw (1.9,+.05) -- (10.1,+.05);
                \draw (1.9,1.95) -- (10.1,1.95);
                \draw (1.9,2.05) -- (10.1,2.05);
    
                \node[enc] at (2,0){};
                \node[enc] at (2,2){};
            \end{scope}
            \filldraw[fill=white] (4,-.5) rectangle (6,6.5);
            \filldraw[fill=white] (4.1,-.4) rectangle (5.9,6.4);
            \draw (5,3) node {$U_{\mathrm{enc}}$};
        }
        \foreach \y in {1, 2, ..., 2}{
            \begin{scope}[shift={(10, 4*\y)}]
                \draw (0,0)--(2,0);
                \draw (0,2)--(2,2);
                \node[dec] at (0,0){};
                \node[dec] at (0,2){};
                \draw (0, 1.1) node {$\vdots$};
                \projectBox{2,0}
                \projectBox{2,2}
            \end{scope}
        }
        \measureBox{12,4}
        \cnot{7,6}{7,8}
        \cnot{8,6}{8,10}
        \node[rotate=70] at (7.5,9) {$\cdots$};
        		\draw (10.1,-.05)--(11,-.5);
		\draw (10.1,+.05)--(11,-.4);
		\draw[dashed] (11,-.5)--(12,-1);
		\draw[dashed] (11,-.4)--(12,-.9);
		\draw (10.1,2+.05)--(10.15,2.05)--(11.05,.5);
		\draw (10.1,2-.05)--(11,0.4);
		\draw[dashed] (11,0.4)--(11.8,-1);
		\draw[dashed] (11.05,.5)--(11.83,-.9);
		\pic at (12,-2.2) {qtrash};
	\end{scope}
}
\newcommand{\embeddingFig}{
	\begin{tikzpicture}[scale=1,thick,
			tensor/.style={draw,fill=blue!20,minimum width=.5cm,minimum height=1cm},
			midarrow/.style={
					postaction={decorate},
					decoration={markings, mark=at position 0.7 with {\arrow{latex}}}}
		]

		\node at (-1,.5) {\textbf{(a)}};
		\draw (-.5,.3)--(.5,.3);
		\draw (-.5,-.3)--(.5,-.3);
		\node[tensor]  at (0,0) {$U$};
		\node[text width=3cm, align=center] at (0,-1.5) {2-qubit gate};

		\begin{scope}[shift={(3,0)}]
			\node at (-1,.5) {\textbf{(b)}};
			\draw (-.5,.3)--(1.5,.3);
			\draw (-.5,-.3)--(1.5,-.3);
			\node[tensor]  at (0,0) {$U$};
			\node[tensor,fill=red!10]  at (1,0) {$\eta$};
			\node[text width=4cm,align=center] at (.5,-1.5) {gate followed by depolarizing noise};
		\end{scope}
		\begin{scope}[shift={(7,0)}]
			\node at (-1,.5) {\textbf{(c)}};
			\draw (-.5,.3)--(.5,.3);
			\draw (-.5,-.3)--(.5,-.3);
			\draw[color=red] (0,-.8)--(.5,-.8);
			\node[tensor,minimum height=1.5cm]  at (0,-.25) {$V$};
			\node[text width=3cm,align=center] at (0,-1.5) {Stinespring extension};
			\node[text width=3cm,align=center] at (1.25,0) {$=$};
			\begin{scope}[shift={(2.5,0)}]
				\draw (-.8,0)--(.8,0);
				\draw (0,-.8)--(0,.8);
				\draw[color=red] (0,0)--(.5,-.5);
				\node[circle,draw,fill=blue!20,minimum size=20pt,inner sep=0pt] at (0,0) {$T$};
				\node[text width=3cm,align=center] at (0,-1.5) {PEPS tensor};
			\end{scope}
		\end{scope}
    \end{tikzpicture}
}
\newcommand{\CTCFig}{
	\begin{tikzpicture}[scale=1,thick,
			tensor/.style={draw,fill=blue!20,minimum width=.4cm,minimum height=.4cm},
			midarrow/.style={
					postaction={decorate},
					decoration={markings, mark=at position 0.7 with {\arrow{latex}}}}
		]
		\small
		\begin{scope}[scale=.65]
			\node at (0,0) {\textbf{(a)}};
			\draw (0, -3)--(6, -3);
			\draw (2, -2)--(4, -2)--(5,-1)--(4,0)--(2,0)--(1,-1)--(2,-2);
			\node at (3,-.5) {CTC};
			\cnot{3,-3}{3,-2}
			\draw (7.5,-3)--(10.5,-3);
			\node at (6.75,-3) {$=$};
			\projectBox{9,-3}
		\end{scope}
		\begin{scope}[scale=.65,shift={(13,0)}]
			\node at (0,0) {\textbf{(b)}};
			\draw (0, -3)--(6, -3);
			\draw (2, -2)--(4, -2)--(5,-1)--(4,0)--(2,0)--(1,-1)--(2,-2);
			\node at (3,-.5) {CTC};
			\cnot{3.5,-3}{3.5,-2}
			\draw (7.5,-3)--(10.5,-3);
			\node[tensor,fill=red!10]  at (2.5,-2) {$X$};
			\node at (6.75,-3) {$=$};
			\projectOneBox{9,-3}
		\end{scope}
    \end{tikzpicture}
}
\begin{document}
\title{Computational complexity of injective projected entangled pair states}

\author[1]{Dylan Harley}
\author[2]{Freek Witteveen}
\author[1]{Daniel Malz}
\affil[1]{Department of Mathematical Sciences, University of Copenhagen, Denmark}
\affil[2]{Centrum Wiskunde en Informatica and QuSoft, Amsterdam, the Netherlands}

\date{\today}
\maketitle

\begin{abstract}
	Projected entangled pair states (PEPS) constitute a variational family of quantum states with area-law entanglement.
	PEPS are particularly relevant and successful for studying ground states of spatially local Hamiltonians.
	However, computing local expectation values in these states is known to be \postBQP-hard.
	Injective PEPS, where all constituent tensors fulfil an injectivity constraint, are generally believed to be better behaved, because they are unique ground states of spatially local Hamiltonians.
	In this work, we therefore examine how the computational hardness of contraction depends on the injectivity.
	We establish that below a constant positive injectivity threshold, evaluating local observables remains \postBQP-complete, while above a different constant nontrivial threshold there exists an efficient classical algorithm for the task, resolving an open question from (Anshu et al., STOC `24). We do this by proving that noisy postselected quantum computation can be made fault-tolerant. 
\end{abstract}

\section{Introduction}
Tensor network states provide a way of parametrising multipartite quantum states with constrained entanglement. This is especially relevant for ground states of lattice Hamiltonians, which in many cases are believed to exhibit area-law entanglement. Projected entangled pair states (PEPS) are a general class of such tensor network states.
They are defined through a choice of graph (connecting physical sites through virtual bond spaces), and a choice of tensor at each vertex that maps the incoming bond spaces to a local physical space.
PEPS are commonly used to study ground states of lattice Hamiltonians, and correspondingly one often chooses the graph to be a subset of a finite-dimensional lattice, in which case the PEPS have entanglement entropy satisfying an area law.
In this work, we are principally concerned with two-dimensional PEPS.

On a one-dimensional lattice, the PEPS construction gives rise to a class known as matrix product states (MPS).
Generally, computations involving MPS are much more tractable than their extensions to general PEPS; MPS possess a canonical form for ease of classification, and there are rigorous guarantees for the approximation of gapped ground states in one dimension by MPS.
In contrast, the structure theory of PEPS and its canonical forms in two and more dimensions have weaker properties \cite{Cirac2021,Acuaviva2023}, and the precise relation between gapped ground states, area law states and PEPS is not fully understood \cite{Cirac2019}.
Contraction and computation of local observables is efficient for MPS, whereas for PEPS these tasks are known to be \postBQP-complete~\cite{Schuch2007} (roughly equivalent to simulating a quantum computer with access to postselection), and hence beyond the reach of efficient classical or quantum algorithms.
In fact, PEPS can encode undecidable problems, such as equivalence testing (determining whether two different tensors give the same state for all system sizes)~\cite{Scarpa2020}.

In spite of these difficulties, PEPS are of great theoretical and numerical interest.
Many paradigmatic strongly correlated many-body states have exact PEPS representations, and PEPS are a useful tool for manipulating such states and classifying (symmetry-protected) topological order, see Ref.~\cite{Cirac2021} for an overview. On the numerical side, variational optimisation over PEPS is one of the most accurate methods for the simulation of strongly correlated two-dimensional quantum systems in practice~\cite{Schuch2012,Xie2014,Zheng2017}.
It is therefore desirable to obtain a better understanding of the fundamental computational limits of PEPS, especially under natural conditions that make the family of states more ``physically reasonable.''

The most important such constraint is injectivity.
A PEPS tensor $T_x$ is called injective if it is injective as a map from the bond Hilbert spaces to the physical Hilbert space.
We quantify this using the injectivity parameter $\delta$ which is defined as the inverse of the condition number of the tensor.
If all constituent tensors are injective, this leads to desirable properties of the PEPS, all related to a reduction in complexity. In particular,
\begin{enumerate}
	\item testing equivalence becomes decidable and determined by gauge symmetries~\cite{Molnar2018},
	\item the state has a parent Hamiltonian: a local nearest-neighbour Hamiltonian that has the PEPS as its unique ground state. For strongly injective PEPS (made precise below) this parent Hamiltonian is gapped,
	\item the tensors have bounded condition number, so one can expect better numerical stability.
\end{enumerate}

Injectivity is thus related to the computational complexity of PEPS, and in this work we make this connection more precise and robust.
We do so by showing that both regimes (hardness at zero injectivity and easiness at maximal injectivity) extend to a constant (system size independent) range of the injectivity parameter.
The main result of this work is the following.
\begin{thm*}[Informal]
	Consider the task of estimating local observables of a two-dimensional $\delta$-injective PEPS with constant bond dimension to precision $\eps$. If $\delta > \delta_{\easy}$, there is a $\poly(\eps^{-1})$ algorithm, independent of system size. If $\delta < \delta_{\hard}$, computing local observables to constant precision is \postBQP-hard.
\end{thm*}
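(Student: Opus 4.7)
The plan is to attack the two halves of the theorem separately, by essentially unrelated techniques.

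For the hardness direction, the starting point is the \postBQP-completeness result of Schuch et al., which encodes a postselected quantum circuit as a PEPS whose tensors represent individual gates but are typically very far from injective: they contain rank-one projectors needed to realise postselection, ancilla preparation, and measurement. My plan is to make the entire construction robust to a small amount of depolarising noise inserted after each gate, and then identify the injectivity parameter with the noise strength. Concretely, for each gate $U$ I would form the Stinespring dilation $V$ of the noisy channel $\eta \circ U$, treat the environment as an additional virtual index, and take the resulting object $T$ as the PEPS tensor, exactly as sketched in panels (b) and (c) of the Stinespring figure. A short calculation shows that the smallest singular value of $T$ relative to its largest is monotone in the noise strength $\eta$, so tuning $\eta$ can realise any target injectivity up to some constant $\delta_{\hard}$. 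To guarantee that the noisy PEPS still performs the original postselected computation, I would wrap every logical qubit in a concatenated quantum error-correcting code as in the encoded-circuit figure: provided $\eta$ stays below the standard fault-tolerance threshold, the logical outcome is preserved to constant precision, and since the code is spatially local with constant overhead per layer, the encoded PEPS still lives on a two-dimensional lattice with constant bond dimension.

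For the easiness direction, the strategy is to exploit high injectivity to write the contraction as a rapidly convergent expansion around a trivial tensor network. After a suitable rescaling, every tensor $T_x$ will satisfy $\delta\, \id \le T_x\dagg T_x \le \id$, so each on-site transfer operator takes the form $\id + \mathcal{R}_x$ with $\norm{\mathcal{R}_x} \le 1 - \delta^2$. Expanding each $\mathcal{R}_x$ inside the double-layer contraction yields a polymer/cluster expansion whose terms are indexed by connected subsets of sites and whose weights decay as $(1-\delta^2)^{|C|}$ for a cluster $C$. Once $1-\delta^2$ lies below the Kotecký--Preiss radius of the two-dimensional lattice, the expansion converges absolutely; truncating to clusters of diameter $\O(\log(1/\eps))$ around the support of the observable will then give an algorithm with runtime $\poly(\eps^{-1})$, independent of system size. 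The critical radius of this cluster expansion is what I would take as $\delta_{\easy}$.

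The main technical obstacle, in my view, is ensuring that the easiness bound is \emph{uniform in system size} at a strictly positive threshold: a naive expansion only converges when $1-\delta^2$ is of order $1/\poly(n)$, so one really needs a polymer model in which cluster activities depend only on the cluster itself and not on its environment, together with a combinatorial enumeration of 2D clusters that gives a sharp convergence radius. In the hardness direction, the parallel (but likely milder) obstacle is matching the PEPS-level noise model to the noise model assumed in the fault-tolerance proof while simultaneously keeping the tensor-level condition number under control, which will require a careful coordination between the concatenation depth, the noise channel, and the target $\delta_{\hard}$.
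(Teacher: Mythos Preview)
Your hardness direction has the right architecture but misses the main technical obstacle, and the obstacle you do flag is not the real one. Standard fault-tolerance theorems apply to unitary circuits with final measurement, not to postselection. In a \postBQP{} instance the postselection amplitude $\alpha_0$ can be as small as $2^{-\O(n)}$, while a noisy projector $\Phi_\eta$ only suppresses the unwanted $\ket{1}$ component by a \emph{constant} factor $\O(\eta)$. After a single noisy projection the ratio of wanted to unwanted weight is roughly $|\alpha_0|^2 : \eta|\alpha_1|^2$, which is exponentially bad. Wrapping the unitary part in a concatenated code does not fix this: the decoding step that brings the postselection qubit back to a bare physical qubit incurs an $\O(\eta)$ error probability that is \emph{independent} of the concatenation depth $k$, so you cannot make the physical projection arbitrarily clean by taking $k$ large. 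The paper's resolution is to copy the postselection register onto $m = \Omega(n)$ fresh logical qubits via logical $\CNOT$s before decoding, and then noisily project all $m$ of them; the unwanted $\ket{1}^{\otimes m}$ branch is now suppressed by $\O(\eta)^m$, which beats $|\alpha_0|^2 = 2^{-\O(n)}$ once $m$ is linear in $n$. This ``fault-tolerant postselection'' is the actual new ingredient, and it is what you should identify as the obstacle rather than matching noise models.

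Your easiness direction is a genuinely different route from the paper and is plausible in spirit. The paper does not set up a polymer expansion; instead it quotes the result that above some $\delta_{\easy}$ the parent Hamiltonian is uniformly gapped, hence correlations decay exponentially, and then follows Schwarz et al.: one replaces tensors on a ring of radius $\ell=\O(\log(1/\eps))$ around the observable by identity tensors one at a time, bounding each change in the normalized expectation via exponential clustering, and finally contracts the remaining $\O(\ell)\times\O(\ell)$ patch in time $D^{\O(\ell)}=\poly(1/\eps)$. Your cluster-expansion approach could reach the same conclusion, but be aware that the on-site object in the double layer is not $T_x^\dagger T_x$ (that operator acts on the physical space, not the virtual one); the perturbation $\mathcal R_x$ must be set up on the doubled virtual legs, and controlling the ratio $\braket{\Psi|O|\Psi}/\braket{\Psi|\Psi}$ requires the connected (log) version of the expansion. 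The paper's Hamiltonian-gap route sidesteps all of this bookkeeping by importing clustering as a black box.
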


To establish the easy regime, we use the result that close to maximal injectivity PEPS are the ground states of a gapped Hamiltonian~\cite{Schuch2011}. This implies exponentially decaying correlations, which means that when we calculate the expectation value of a local observable, we can restrict the calculation to a small patch around the observable~\cite{Schwarz2017}.
Hardness of computing local observables at constantly small injectivity is based on a construction similar to Refs \cite{Anshu2024,Malz2025}, where it was shown that one can embed quantum circuits in injective PEPS, but that injectivity introduces noise. This noise, if sufficiently weak, can however be mitigated using error correction.
To show that contracting weakly injective PEPS is still \postBQP-hard, our technical contribution here is to show that one can also make post-selected quantum computation fault tolerant, a result that may be of independent interest.

\subsection*{Relation to previous work}

The \postBQP-completeness (or equivalently \PP-completeness~\cite{Aaronson2005}) of contracting two-dimensional PEPS was shown in Ref.~\cite{Schuch2007} by embedding quantum circuits and postselection operations in them. One can do so in two spatial dimensions by using one dimension for space and one dimension for time.
This construction gives rise to a somewhat pathological non-injective PEPS whose physical degrees of freedom in the bulk are completely decoupled from the bond system (and in a product state).

Our work is inspired by Ref.~\cite{Anshu2024, Malz2025}, where it was shown that circuits can also be embedded in injective PEPS, albeit at the expense of introducing noise into the circuit. As is shown there, computing local observables in injective PEPS below a constant injectivity threshold is \BQP-hard, because the noise can be mitigated using error correction.
In \cite{Malz2025}, it is shown that for isometric tensor network states, which are PEPS in which the tensors can be interpreted as iteratively applying isometries, the problem is \BQP-complete. In this case, the state is by construction normalized, whereas for general PEPS with constant injectivity the normalization may be exponentially small.

For the easy regime, our argument closely follows Ref.~\cite{Schwarz2017}, which gives a \emph{quasipolynomial} algorithm for approximating local observables for PEPS with uniformly gapped parent Hamiltonians, which we adapt to a polynomial time algorithm. That strongly injective PEPS are uniformly gapped follows from the results in Ref.~\cite{Schuch2011}.

Another relevant work is Ref.~\cite{Haferkamp2020}, which studies \emph{average-case} hardness of PEPS contraction. This also concerns injective PEPS (as random PEPS with appropriate physical and bond dimension are almost surely injective). There, the conclusion is that PEPS contraction is $\sharpp$-hard on average. The argument, however, requires the bond dimension to increase with system size (which implies that the injectivity decreases with increasing system size), while for the worst-case hardness in this work bond dimension $D=2$ and constant injectivity suffice to prove hardness.

\subsection*{Discussion and future directions}

\textbf{Normalization, observables and translation invariance.} Our results apply to computing the expectation values of local observables to additive error with respect to the normalized PEPS (\nlev).
A more basic problem is to just compute the normalization of the PEPS (\normalization), but the relation to $\nlev$ is unclear.
For general PEPS, if both $\nlev$ and $\normalization$ are to multiplicative precision, we have $\nlev=\normalization$~\cite{Schuch2007}.
However, if we ask for additive precision, the reductions no longer work.

Another interesting case are isoTNS~\cite{Zaletel2020}, which have unity norm by construction, but \nlev{} is \BQP-complete~\cite{Malz2025}, even when assuming tensors with constant injectivity.
The most interesting case (to us) that our work leaves open are translation-invariant PEPS, which have the same tensor everywhere.
We observe in \cref{sec:hardness uniform} that computing the normalization to exponential additive precision in that case is hard for constant injectivity, but this does not obviously imply that estimating local observables in the normalized state is hard.

\textbf{Learning PEPS.}
Suppose one is given many copies of a quantum state and a graph with the promise that it has a description as a PEPS on that graph with some given bond dimension. How hard is it to learn (not necessarily uniquely determined) tensors that approximate the state well? For MPS this question is well-studied \cite{Cramer2010,Zhao2024,Fanizza2023}, for PEPS less so.
There may be a dichotomy similar to our main result: at sufficiently high injectivity, PEPS are easy to learn, while for small injectivity they are hard to learn (under computational assumptions).

\textbf{Other complexity measures for PEPS contraction.} We have identified regimes that are classically easy and quantumly hard, and of course it is interesting to better understand what is found in between. One direction is further characterization of when quantum computers can efficiently perform tensor network calculations~\cite{Arad2010,Lin2022}. A different interesting direction is opened by Refs~\cite{Chen2025,Jiang2024}. They show that if the tensors making up a PEPS have a bias towards positive numbers, this can give a reduced average-case hardness, pointing to a different complexity transition than the one studied in this work.

\textbf{Hamiltonian complexity.} The motivation of Ref.~\cite{Anshu2024} for the fault-tolerance construction for injective PEPS was to study Hamiltonian complexity theory, and to give an alternative circuit-to-Hamiltonian reduction. We strengthen this connection further by also allowing postselection in the circuit. This may be a useful new tool, especially when one wants to compute ground state energies to high precision \cite{Deshpande2022}.
A direct consequence of our construction is \postBQP-hardness of computing local observables to constant precision of the (unique) ground states of a frustration-free Hamiltonian (since every PEPS with non-zero injectivity has a parent Hamiltonian).

\textbf{Postselected quantum computation.} Our strategy for proving the hardness of contracting weakly injective PEPS involves designing a scheme for noise-robust quantum circuits with noisy postselection \cite{Aaronson2005} which are still able to perform $\postBQP$-complete computations (see \cref{sec:postBQP-proof}). This may be viewed as an extension of the threshold theorem for $\BQP$ circuits \cite{Aharonov1999}, and may be of independent interest; the possibility of making $\postBQP$ noise-robust has previously been raised in the context of complexity theory \cite{aaronson2004quantum,Aaronson2005,abrams1998nonlinear} and in physically-motivated computational models with closed timelike curves \cite{leung2013computational}.

\subsection*{Organization of this work}
We set up notation and the relevant definitions for tensor network states in \cref{sec:injective tensor networks}. We prove efficient evaluation of local observables for low injectivity in \cref{sec:easy regime}. Next, in \cref{sec:postBQP-proof}, we describe an embedding of noisy circuits with postselection into injective PEPS and use this in combination with a procedure for fault-tolerant postselection to prove $\postBQP$-hardness for low injectivity.

\section{Injective projected entangled pair states}\label{sec:injective tensor networks}
We start by formally defining the notion of (injective) PEPS.

We let $G = (W,E)$ be a graph with vertices $W$ and edges $E$.
Each edge $e \in E$ has an associated \emph{bond dimension} $D_e$.
If $e = (xy)$ for $x, y \in W$, we associate Hilbert spaces $\H_{e,x} = \CC^{D_e} = \H_{e,y}$ to the edge.
We now define the following state
\begin{align}\label{eq:link state}
	\ket{\Phi} = \bigotimes_{e \in E} \ket{\phi_e} \qquad \text{where} \quad \ket{\phi_e} = \frac{1}{\sqrt{D_e}} \sum_{i=1}^{D_e} \ket{ii} \in \H_{e,x} \otimes \H_{e,y}.
\end{align}
This state consists of maximally entangled states, distributed according to the graph $G$.

\begin{dfn}[PEPS]\label{def:peps}
	A \emph{projected entangled pair state (PEPS)} is determined by a graph $G = (W,E)$, bond dimensions $D_e$ for $e \in E$, physical dimensions $d_x$ for $x \in W$ and a collection of linear maps
	\begin{align*}
		T_x : \bigotimes_{e = (xy) \in E} \H_{e,x} \to \H_x = \CC^{d_x}.
	\end{align*}
	The PEPS is then given by
	\begin{align*}
		\ket{\Psi} = \left(\bigotimes_{x \in W} T_x \right) \ket{\Phi} \in \bigotimes_{x \in W} \H_x,
	\end{align*}
	where $\ket{\Phi}$ is defined as in \cref{eq:link state}.
\end{dfn}

The state $\ket{\Psi}$ is in general not be normalized.
The normalization can be computed as
\begin{align}\label{eq:normalization}
	\braket{\Psi | \Psi} = \bra{\Phi} \bigotimes_{x \in W} T_x^\dagger T_x \ket{\Phi}.
\end{align}
In our convention, $\ket{\Psi}$ is normalized if all $T_x$ are isometries.

The graph $G$ can in principle be arbitrary, but of special interest are lattices, either with open or periodic boundaries.
For the special case of a one-dimensional lattice, the resulting PEPS is known as a \emph{matrix product state} (MPS).
If the graph is a lattice with periodic boundary conditions, it is of interest to consider states where one uses the same linear map at each site. Such PEPS are referred to as \emph{uniform} (or translation invariant).

The PEPS is called \emph{injective} if the maps $T_x$ are all injective.
We would like to use a quantitative version of this.
Given injective $T_x$, let $\sigma_{1}(T_x) \geq \dots \geq \sigma_{d_x}(T_x)$ denote its singular values.
The condition number of $T_x$ is given by the ratio of the largest and smallest singular values
\begin{align*}
	\kappa(T_x) = \frac{\sigma_1(T_x)}{\sigma_{d_x}(T_x)}.
\end{align*}

\begin{dfn}[$\delta$-injective PEPS]
	Let $\ket{\Psi}$ be a PEPS constructed from linear maps $T_x$ as in \cref{def:peps}.
	We say that $\ket{\Psi}$ is \emph{$\delta$-injective} for $0 < \delta \leq 1$ if the condition number $\kappa(T_x) \leq \delta^{-1}$ for all $x \in W$.
\end{dfn}

Without loss of generality, we take $\norm{T_x}_{\infty} = \sigma_1(T_x) = 1$ (this only changes normalization), in which case the PEPS is $\delta$-injective if the smallest singular value of across $T_x$ is at least $\delta$.
If a PEPS is $1$-injective, it is called \emph{isometric}, since in this case all $T_x$ are isometries.

\subsection{Parent Hamiltonians}
A useful property of injective PEPS is that one can define an associated Hamiltonian which is local with respect to the graph $G$, which is frustration-free, and has the PEPS $\ket{\Psi}$ as its unique ground state.
Specifically, we can define a parent Hamiltonian by
\begin{align}
	H     = \sum_{e \in E} h_e                                                                         \quad \text{ where } \quad
	h_{e} =  (T_{x}^{-1})^\dagger (T_{y}^{-1})^\dagger \left(\id-\proj{\phi_e}\right)T_{x}^{-1}T_{y}^{-1} \quad \text{ for } e = (xy).
	\label{eq:parent-hamiltonian}
\end{align}
It is easy to see that Hamiltonian is such that each term $h_e \geq 0$, and $ h_e \ket{\Psi} = 0$ (so $H$ is frustration-free), and $\ket{\Psi}$ is the unique ground state.

\subsection{Computational tasks}

We now define two computational tasks associated to PEPS contractions.
For these computational tasks, the input consists of the graph $G = (W,E)$ and the local maps $T_x$ for $x \in W$.
The size of the problem is measured in terms of the size of the graph $n = \abs{W}$, the maximal bond dimension $D_e$, and the required precision $\eps$.
In most cases we will assume that the bond dimension is constant and consider primarily two-dimensional lattices.

\begin{dfn}[\normalization]\label{dfn:norm}
	Given a PEPS $\ket{\Psi}$ defined by $G = (W,E)$ and maps $T_x$ for $x \in W$, compute the norm $N = \braket{\Psi | \Psi}$.
\end{dfn}
In applications, one typically is interested in the normalized state $N^{-\frac12} \ket{\Psi}$.
It can be reasonable to require exponential additive precision in $n$ for $\normalization$; if the PEPS is assumed to be $\delta$-injective the norm $N$ can be as small as $\delta^{2n}$.
Alternatively, one can require multiplicative precision.
In many cases, one would like to know expectation values of local observables, with respect to the normalized state.

\begin{dfn}[\nlev, Normalized expectation value of local observables]\label{dfn:nlev}
	Given a PEPS $\ket{\Psi}$ defined by a graph $G = (W,E)$ and maps $T_x$ for $x \in W$, and a Hermitian operator $O_X$ with support on a constant number of vertices $X \subseteq W$, compute
	\begin{align*}
		\langle O_X \rangle = \frac{\bra\Psi O_X \ket\Psi}{\langle \Psi\ket\Psi}.
	\end{align*}
	to constant additive precision $\eps$.
\end{dfn}
We can also consider the \emph{decision} version of this problem, which is to decide whether either $\langle O_X \rangle \leq \frac13$ or $\langle O_X \rangle \geq \frac23$.
This task is important, since it allows one to evaluate the energy (density) of a PEPS for a local Hamiltonian, which in turn is relevant for variational optimization. Additionally, it allows one to determine physical properties of the PEPS (such as the magnetization).

\begin{figure}[t]
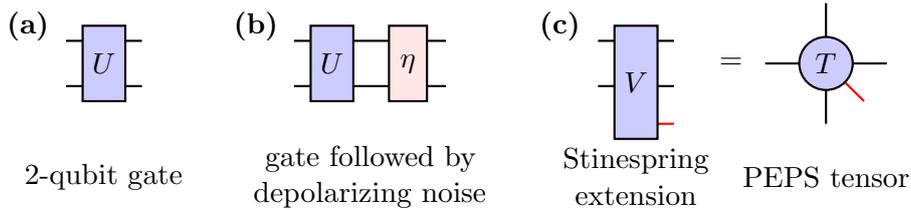

	\centering
	\embeddingFig
	\caption{{Embedding circuits into PEPS~\cite{Malz2025}.}
		\textbf{(a)} A (postselected) circuit is built out of gates (and projections) $U$ acting on a Hilbert space of dimension $q$ (here, $q=2\times2$).
		\textbf{(b)} In the noisy circuit considered here, each gate/projection is followed by depolarizing noise at rate $\eta$, yielding the map $\Phi_\eta$.
		\textbf{(c)} The PEPS tensors $T$ are obtained by considering the Stinespring representation of the map using an ancilla of dimension $q^2$ (red). To obtain $\delta$-injective tensors, we need $\eta=\delta^2/(4(1+3\delta^2))$.
	}
	\label{fig:circuit-embedding}
\end{figure}

\section{Strongly injective PEPS are easy to contract}\label{sec:easy regime}
Our first observation is that estimating local observables to polynomial precision is classically efficient.
It is known that sufficiently injective PEPS have gapped parent Hamiltonians, which implies that the resulting states have exponentially decaying correlations~\cite{Schuch2011}. This can be used to show that to compute local observables to precision $\eps$, one can restrict to a patch of width $\O(\log \eps^{-1})$ of the PEPS around the local observable. This patch can be contracted at cost $\poly(\eps^{-1})$, leading to the conclusion in \cref{cor:easy regime}.
To make this precise, we use the following result from Ref.~\cite{Schuch2011}.

\begin{lem}[Uniform gap for strongly injective PEPS~\cite{Schuch2011}]\label{lem:gap}
	There exists $\delta_{\easy} < 1$ for any lattice, such that $\delta$-injective PEPS with $\delta>\delta_{\easy}$ have a parent Hamiltonian $H = \sum_l h_l$ with local terms with $\norm{h_l}\leq 1$ and with a gap $\Delta > 1/2$ independent of system size.
\end{lem}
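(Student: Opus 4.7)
The plan is to reduce the claim to the isometric case $\delta=1$ via a perturbation argument, using the Knabe-Nachtergaele martingale method to convert a finite-size gap into a uniform, system-size-independent gap on the full lattice.

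First I would analyse the isometric limit $\delta = 1$, where each $T_x$ is an isometry (so $T_x^{-1} = T_x^\dagger$ on the range of $T_x$) and the parent Hamiltonian terms $h_e$ in \cref{eq:parent-hamiltonian} become orthogonal projectors onto the complement of the local PEPS subspace. After blocking a constant number of tensors so that each block is strongly injective (its image contains the full boundary bond subspace), the restricted parent Hamiltonian on a patch of bounded radius $r$ has ground space exactly equal to the span of PEPS with arbitrary boundary conditions on that patch, and its finite-size spectral gap $\Delta_r^{(0)}$ can be estimated by an explicit calculation on the constant-size patch. The martingale method for frustration-free Hamiltonians then gives: once $\Delta_r^{(0)}$ exceeds a geometric threshold depending only on $r$ and the lattice structure, the parent Hamiltonian on any finite system inherits a constant lower bound on its gap, independent of system size. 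Taking $r$ large enough, the criterion is comfortably satisfied in the isometric case, giving gap at least $1$.

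Next I would extend to $\delta > \delta_{\easy}$ by perturbation. Using the polar decomposition $T_x = V_x P_x$ with $V_x$ a partial isometry and $P_x = (T_x^\dagger T_x)^{1/2}$ satisfying $\norm{P_x - \id} \leq 1 - \delta$, the parent Hamiltonian terms differ from their isometric counterparts by $\O(1-\delta)$ in operator norm on any patch of bounded radius. Crucially, the martingale criterion is a \emph{local} condition on bounded-radius patches, so it is stable under such local perturbations: the patch gap changes by only $\O(1-\delta)$, and for $\delta$ sufficiently close to $1$ the finite-size criterion still yields a global gap $\Delta > 1/2$. Choosing $\delta_{\easy} < 1$ accordingly completes the proof, and the bound $\norm{h_l} \leq 1$ follows by normalising each local term against its operator norm, which is $\O(1)$ since $\norm{T_x^{-1}} \leq \delta^{-1}$.

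The main obstacle is avoiding system-size-dependent error accumulation: a naive global perturbation bound on $H$ would scale with the number of sites and be useless, so the locality of the Knabe-Nachtergaele criterion is essential. A secondary subtlety is handling rectangular $T_x$ when the physical dimension exceeds the total bond dimension, where $T_x^{-1}$ must be interpreted as the Moore-Penrose pseudoinverse; one checks that only the singular values of $T_x$ enter the relevant norm bounds, so the $\delta$-injectivity hypothesis still controls everything.
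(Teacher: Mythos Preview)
The paper does not prove this lemma; it is quoted as a result from Ref.~\cite{Schuch2011} and used as a black box (the only additional information the paper extracts is the numerical value $\delta_{\easy}\approx 0.982$ for the square lattice). So there is no ``paper's own proof'' to compare against.

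That said, your sketch is broadly in the spirit of how \cite{Schuch2011} (and the surrounding literature) establishes such uniform gaps: one verifies a finite-size gap condition on constant-radius patches in the isometric ($\delta=1$) limit, applies a martingale/Knabe-type criterion to promote it to a system-size-independent gap, and then uses that the finite-size criterion depends continuously on the local tensors to extend the conclusion to all $\delta$ sufficiently close to $1$. Your observation that only the \emph{local} patch gap needs to be controlled, rather than the global Hamiltonian perturbation, is exactly the point that prevents error accumulation and makes the argument work. One caveat: in your first paragraph you assert that the isometric parent Hamiltonian automatically has a large enough finite-patch gap to trigger the martingale criterion after taking $r$ ``large enough''; this is not free, since the Knabe/Nachtergaele threshold also depends on $r$ and on the lattice geometry, and verifying it is where the actual quantitative work (and the specific constant $0.982$) comes from in \cite{Schuch2011}. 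Your sketch correctly identifies the structure of the argument but would need that explicit finite-size computation to be a complete proof.
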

Concretely, the argument in Ref.~\cite{Schuch2011} establishes that $\delta_{\easy}$ for the square lattice is at most $0.982$.

Efficient contraction now follows from an argument given in Ref.~\cite{Schwarz2017}. \Cref{cor:easy regime} improves over Ref.~\cite{Schwarz2017}, which obtained a quasipolynomial rather than polynomial algorithm. The reason is that Ref.~\cite{Schwarz2017} bounds the cost of contracting a region of size $\ell \times \ell$ by $\exp(\ell^2)$ rather than $\exp(\ell)$.
Apart from this, the argument is identical; we reproduce it for completeness.

\begin{cor}[\nlev{} is efficient in strongly injective PEPS]\label{cor:easy regime}
	Let $G$ be a lattice in $d$ spatial dimensions. There exists $\delta_{\easy} < 1$, such that for $\delta$-injective PEPS with $\delta>\delta_{\easy}$ on $G$, there exist a classical algorithm to solve $\nlev$ to precision $\eps$ in quasipolynomial time $\exp(\O(\log^{d-1}(\eps^{-1})))$ (independent of the size of the lattice). In particular, for PEPS in two spatial dimensions this is $\poly(\eps^{-1})$. Moreover, these states can be prepared efficiently on a quantum computer.
\end{cor}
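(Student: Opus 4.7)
The strategy is a direct reduction to a small-patch contraction, exploiting \cref{lem:gap}. Fix the support $X$ of the local observable $O_X$, let $B_\ell \supseteq X$ be a ball in the lattice of radius $\ell$ around $X$, and approximate $\langle O_X \rangle$ by the expectation value of $O_X$ in the sub-PEPS $\ket{\tilde\Psi_\ell}$ built from the restriction of the tensor network to the subgraph on $B_\ell$, with a fixed prescription for the dangling boundary bonds.

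The correctness of this approximation is the content of Ref.~\cite{Schwarz2017}. \Cref{lem:gap} gives a constant gap $\Delta > 1/2$ for the parent Hamiltonian $H$ of $\ket{\Psi}$, and the same lemma applied on the smaller lattice gives a uniform gap for the parent Hamiltonian of $\ket{\tilde\Psi_\ell}$. Quasi-adiabatic continuation along a smooth path of $\delta$-injective parent Hamiltonians interpolating between the two, combined with Lieb--Robinson bounds, shows that the reduced density matrices on $X$ differ by $\exp(-\Omega(\ell))$ in trace norm. Classically, I would contract $\ket{\tilde\Psi_\ell}$ via the standard boundary-state method in $d$ dimensions: sweep through the patch along one axis, maintaining a boundary tensor of dimension $D^{O(\ell^{d-1})}$ supported on the orthogonal $(d{-}1)$-dimensional hyperplane, and updating it $O(\ell)$ times at cost polynomial in its size. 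This gives runtime $\exp(O(\ell^{d-1}))$; setting $\ell = \Theta(\log \eps^{-1})$ balances approximation error against runtime and yields the claimed $\exp(O(\log^{d-1}(\eps^{-1})))$ total cost, which is $\poly(\eps^{-1})$ for $d=2$.

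The main obstacle lies in the trace-norm approximation, since ordinary clustering of correlations only controls expectation values of operators whose supports are widely separated, rather than the full reduced density matrix difference; quasi-adiabatic continuation for gapped frustration-free families is the standard remedy, and this is precisely what Ref.~\cite{Schwarz2017} carries out. Our only genuine improvement is to replace their volume-scaling contraction $\exp(O(\ell^d))$ by the surface-scaling contraction $\exp(O(\ell^{d-1}))$ above. For the quantum preparation claim, I would use adiabatic evolution along a path of $\delta$-injective PEPS connecting $\ket{\Psi}$ to a trivial product state (e.g.\ by linearly interpolating each tensor to an isometry), whose parent Hamiltonians remain uniformly gapped by \cref{lem:gap} as long as injectivity stays above $\delta_{\easy}$, and invoke the adiabatic theorem to obtain polynomial preparation time.
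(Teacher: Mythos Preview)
Your high-level outline---reduce to a patch of radius $\ell$, bound the error by $\exp(-\Omega(\ell))$, contract in $\exp(\O(\ell^{d-1}))$, and prepare adiabatically---matches the paper's. But the mechanism you invoke for the error bound is not the one used either in the paper or in Ref.~\cite{Schwarz2017}, and your claim that ``this is precisely what Ref.~\cite{Schwarz2017} carries out'' is incorrect.

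The paper (following \cite{Schwarz2017}) does \emph{not} use quasi-adiabatic continuation. Instead it runs a telescoping argument based purely on exponential clustering. One picks a ring $\{y_1,\dots,y_M\}$ of sites at distance $\geq \ell$ from $X$, and defines $\ket{\psi_j}=\prod_{i\leq j}T_{y_i}^{-1}\ket{\Psi}$, so that $\ket{\psi_M}$ has the patch disentangled from the rest. Each $\ket{\psi_j}$ is still $\delta$-injective (tensors are only replaced by identities), hence gapped by \cref{lem:gap}. The key observation is that passing from $\ket{\psi_j}$ to $\ket{\psi_{j+1}}$ amounts to inserting the \emph{local} operator $(T_{y_{j+1}}^{-1})^\dagger T_{y_{j+1}}^{-1}$ at $y_{j+1}$, and clustering between $O_X$ and this operator (separated by $\ell$) gives
\[
\Bigl|\,\tfrac{\bra{\psi_{j+1}}O_X\ket{\psi_{j+1}}}{\braket{\psi_{j+1}|\psi_{j+1}}}-\tfrac{\bra{\psi_j}O_X\ket{\psi_j}}{\braket{\psi_j|\psi_j}}\,\Bigr|\leq \delta^{-4}e^{-\mu\ell}.
\]
Summing $M=\O(\ell^{d-1})$ such steps yields the claim. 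This sidesteps exactly the obstacle you flagged: one never needs trace-norm control of a reduced density matrix, only clustering of a pair of far-separated observables, because tensor removal \emph{is} insertion of a local observable.

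Your quasi-adiabatic route is a plausible alternative, but as written it is a sketch with a gap rather than a citation: you would have to exhibit a path of PEPS on the \emph{full} lattice (not two Hamiltonians on different systems) whose tensors on the ring deform to isometries while staying above $\delta_{\easy}$ in injectivity---e.g.\ interpolating each $T_{y_i}$ toward its polar isometric part---and then control the quasi-local generator. That is doable but is not in \cite{Schwarz2017}, so you cannot defer to it. The contraction-cost improvement from $\exp(\O(\ell^d))$ to $\exp(\O(\ell^{d-1}))$ and the adiabatic preparation argument are both as in the paper.
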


\begin{proof}
	We prove this by adapting the proof in Ref.~\cite{Schwarz2017} to our case.
	A key assumption in that paper is that all PEPS obtained by replacing a subset of the tensors by the identity have a parent Hamiltonian with a gap that is larger than some constant $\Delta_*$.
	\Cref{lem:gap} proves this assumption for strongly injective PEPS with $\delta > \delta_{\easy}$ and the rest of this proof repeats the relevant steps from Ref.~\cite{Schwarz2017}.

	To compute an observable $O_X$ on a constant number of sites $X$ with $\norm{O_X}=1$, we first pick a contiguous ring of sites at positions $\{y_i\}_{i=1}^M$ surrounding $X$ such that $\forall i,\,\dist(X,y_i)\geq\ell$ and such that if we replace all tensors at positions $\{y_i\}$ by the identity, the resulting PEPS describes a state that is a tensor product between the patch inside the ring (containing $X$) and the rest of the lattice.
	The resulting patch has width $\O(\ell)$.

	Second, we define a sequence of PEPS $\{\ket{\psi_i}\}_{i=0}^M$ where $\ket{\psi_0} = \ket{\Psi}$ is the state of interest, and $\ket{\psi_i}=\prod_{j=1}^iT_{y_j}^{-1}\ket{\psi_0}$.
	Note that $\ket{\psi_M}$ is the state where all the tensors on sites $y_i$ for $i = 1, \dots, M$ have been replaced by the identity tensor, and where the patch is disentangled from the rest of the lattice.
	Each $\ket{\psi_i}$ is a PEPS with injectivity at least $\delta$ (since the only change compared to the initial PEPS is that some tensors have been replaced by $\id$, which is maximally injective). By \cref{lem:gap}, the corresponding parent Hamiltonians $\{H_i\}$ all have gap at least $1/2$.

	Third, using exponential clustering of gapped ground states, with respect to the observables $O_X$ and $(T_{y_{j+1}}^{-1})^\dagger T_{y_{j+1}}^{-1}$ which are separated by a distance $\ell$ \cite{Hastings2004,Nachtergaele2006}, there exists $\mu > 0$ such that
	\begin{equation}
		\begin{aligned}
			\left| \frac{\bra{\psi_j}O_X\otimes (T_{y_{j+1}}^{-1})^\dagger T_{y_{j+1}}^{-1}\ket{\psi_j}}{\langle \psi_j|\psi_j\rangle}
			- \frac{\bra{\psi_j}O_X\ket{\psi_j}\langle\psi_{j+1}|\psi_{j+1}\rangle}{\langle \psi_j|\psi_j\rangle^2}\right|
			 & \leq \norm{O_X}_{\infty} \norm{(T_{y_{j+1}}^{-1})^\dagger T_{y_{j+1}}^{-1}}_{\infty} e^{-\mu \ell} \\
			 & \leq \delta^{-2}e^{-\mu\ell}.
		\end{aligned}
	\end{equation}
	Equivalently, this relates the expectation values of $O_X$ on $\ket{\psi_j}$ and $\ket{\psi_{j+1}}$ via
	\begin{equation}
		\left| \frac{\bra{\psi_{j+1}}O_X\ket{\psi_{j+1}}}{\langle\psi_{j+1}|\psi_{j+1}\rangle}
		- \frac{\bra{\psi_j}O_X\ket{\psi_j}}{\langle \psi_j|\psi_j\rangle}\right|
		\leq \delta^{-2}\exp(-\mu\ell)\frac{\langle \psi_j|\psi_j\rangle}{\langle\psi_{j+1}|\psi_{j+1}\rangle}
		\leq \delta^{-4}\exp(-\mu\ell),
		\label{eq:eval-bound}
	\end{equation}
	where in the second inequality we have used that the ratio of norms is the inverse of the normalized expectation value of $T_{y_{j+1}}^{-1\dagger}T_{y_{j+1}}^{-1}$
	on $\ket{\psi_j}$, which is lower bounded by $\delta^2$.

	Finally, we apply the bound \cref{eq:eval-bound} iteratively $M$ times and note that $M = \O(d\ell)$ to arrive at
	\begin{equation}
		\left| \frac{\bra{\psi_{M}}O_X\ket{\psi_{M}}}{\langle\psi_{M}|\psi_{M}\rangle}
		- \frac{\bra{\psi_0}O_X\ket{\psi_0}}{\langle \psi_0|\psi_0\rangle}\right|
		= \O\mleft(\ell\delta^{-4}\exp(-\mu\ell)\mright).
	\end{equation}
	By choosing a radius $\ell=c_1\log(\eps^{-1})$ we can obtain a final evaluation error below $\eps$ for some constant $c_1>0$.
	Contracting a region of width $\ell$ in $d$ spatial dimensions can be done at cost $\exp(\ell^{d-1})$, which in particular for $d = 2$ gives a polynomial algorithm in the inverse precision $\eps^{-1}$.
	The fact that the state can be prepared efficiently on a quantum computer is a direct consequence of the adiabatic preparation scheme of Ref.~\cite{Ge2016}, and the fact that by \cref{lem:gap} the spectral gap remains constant while deforming to a state with maximal injectivity (which can be prepared efficiently).
\end{proof}

Note that the above result is for \emph{constant} bond dimension $D$. The scaling of contracting a patch of width $\ell$ with bond dimension $D$ is $D^{\O(\ell)}$, so for $d = 2$ we get a scaling which is $\exp(\log(D) \log(\eps^{-1}))$ (which is polynomial in either $\eps^{-1}$ or $D$ when keeping the other variable fixed).

\section{Weakly injective PEPS are hard to contract}\label{sec:postBQP-proof}
\subsection{Noisy circuits as injective PEPS}\label{sec:noisy circuits}

In this section, we provide a general construction how to embed quantum circuits into injective PEPS following Ref.~\cite{Anshu2024,Malz2025}.

The three operations we require are (i) unitary gate, (ii) reset/SWAP, and (iii) projection. All of them can be described by positive (not necessarily trace preserving) 2-qubit channels
\begin{equation}
	\Phi_v(\rho) = \sum_{\alpha=0}^{m}K_\alpha\rho K_\alpha\dagg,
	\label{eq:Phi}
\end{equation}
where for a unitary gate we have $m=0$ and $K_0=U$.
For a reset we consider the single-qubit channel with Kraus operators $\ket{0}\!\bra{k}$ and for a projection onto $\ket{0}$ we use the single-qubit channel with Kraus operator $\proj{0}$.
These can be made into two-qubit channels by applying an identity channel on the neighbouring qubit (if we want all tensors to have the same format).
Recall that PEPS tensors are maps from the virtual space (the Hilbert space of the four virtual qubits at a vertex) to the physical space, which we take to be 16 dimensional. Thus, $T: \H_2^{\otimes 4}\to\H_{16}$.
We can write the maps in \cref{eq:Phi} as PEPS tensors using the Stinespring dilation
\begin{equation}
	\tilde T_v = \sum_{\alpha=0}^m\ket{\alpha}\!\bra{K_\alpha}.
	\label{eq:stinespring}
\end{equation}
In this way, an arbitrary circuit can be embedded in the virtual space of the PEPS. To make the output of the circuit detectable in the physical state, we need to apply a SWAP between virtual and physical space at the end. In \cref{eq:stinespring}, this is already achieved through the reset gate.
However, since $m<15$ in the operations considered here, the $T_v$ are not injective.
Thus, we instead use the perturbed tensor
\begin{equation}
	T_v = \tilde T_v+\delta\sum_{\alpha=m+1}^{16}\ket\alpha\!\bra{K_\alpha},
	\label{eq:Tv}
\end{equation}
where the remaining $K_\alpha$ are chosen such that $\{\ket{K_\alpha}\}$ is an orthonormal basis.
By construction, the largest singular value of $T_v$ is 1 and the smallest is $\delta$.

The maps that are effectively applied in the virtual space (after tracing the physical leg) now read
\begin{equation}
	\Phi_\delta(\rho)=\Phi(\rho)+\delta^2\sum_{\alpha=m+1}^{16}K_\alpha\rho K_\alpha\dagg=(1-\delta^2)\Phi(\rho)+4\delta^2\tr[\rho]\id/4.
	\label{eq:Phi-delta}
\end{equation}
Note that $\Phi_\delta$ from \cref{eq:Phi-delta} is no longer trace preserving. Indeed, $\tr[\Phi_\delta(\rho)]=\tr[\rho](1+3\delta^2)$. This is an unimportant overall factor that is removed by normalizing the PEPS.
Up to this factor, we have
\begin{equation}
	\Phi_\eta(\rho)= (1-\eta)\Phi(\rho) + \eta\tr[\rho]\id/4,
	\label{eq:noisy-map}
\end{equation}
with effective depolarizing noise
\begin{equation}
	\eta=4\delta^2/(1+3\delta^2),
	\label{eq:eta}
\end{equation}
as illustrated in \cref{fig:circuit-embedding}.

\subsection{Tools from fault-tolerant quantum computation}\label{sec:fault tolerance}

Having established that noisy quantum computations can be encoded into injective PEPS, the next step is to show how we can circumvent these errors to realise nearly error-free quantum computations; in this section we sketch the basic ideas of fault tolerance and state the main results that we need for our purposes.

A protocol for fault tolerance requires the choice of a quantum error-correcting code (QECC). We assume that we can prepare computational basis states in this code, and that we can execute logical gates fault-tolerantly. That is, we can perform 1- and 2-qubit gates on the encoded system via constant-sized circuits, followed by a constant-sized error correction circuit $\QEC$ on each logical qubit that corrects any errors of sufficiently small weight (depending on the choice of QECC). Transforming all of the gates in a circuit $\circuit$ in this way will yield a larger logical circuit, denoted $\enc[\mathcal{C}]$. The constant increase in the size of $\enc[\circuit]$ compared to $\circuit$ will introduce more opportunities for errors to occur, however if the physical noise rate is below some constant threshold $\delta_{\hard}$, then the overall effect will be to reduce the logical error rate.

With such a procedure in hand, the next step is to concatenate the QECC to iteratively reduce the logical error rate further \cite{Knill1996}. That is, we repeat the encoding procedure described above, but now starting from the already encoded circuit $\enc[\circuit]$. Applying this procedure $k$ times yields circuit $\enc^k[\circuit]$ (also let $\QEC_k = \enc^{k-1}[\QEC]$ denote the error correction circuit at the $k$-th level). The size of these circuits is exponential in $k$, but fortunately (provided the noise rate is below $\delta_{\hard}$) the logical error rate decreases doubly-exponentially in $k$ (i.e. as $\exp(-\exp(\Omega(k)))$. In this way, for modest values of $k$ one can achieve a circuit with a vanishingly small logical noise rate \cite{Aharonov1999,Knill1998}. Note that the injective PEPS formalism requires that the fault-tolerant circuits have one-dimensional geometry, and contain no noiseless classical wires or mid-circuit measurements. These restrictions are not prohibitive to obtain fault tolerance, though in practice they will considerably lower the constant error threshold required and may place restrictions on the choice of initial QECC \cite{Aharonov1999,Gottesman2000,Svore2005}.

\begin{thm}[Threshold theorem \cite{Aharonov1999}]\label{lem:threshold theorem}
	Let $\circuit$ be a circuit with an $n$-qubit output, of size $\poly(n)$, and let $k\geq 1$. Then, provided that $\eta<\eta_{\thresh}$, the probability of a logical error in the noisy encoded circuit $\big[\enc^k[\circuit] \big]_\eta$ is bounded by $\poly(n)\exp[-\exp(\Omega(k))]$.
\end{thm}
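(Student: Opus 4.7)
The plan is the standard concatenation argument, specialized to the restrictions of the injective PEPS embedding (strictly 1D layout, all-unitary evolution, no mid-circuit measurements, no classical feedforward). First I would fix a base QECC of distance at least $3$ admitting a universal fault-tolerant gate set and an error-correction gadget $\QEC$ in this restricted model; existence is provided by \cite{Aharonov1999,Gottesman2000,Svore2005}. A level-$1$ rectangle then consists of a fault-tolerant logical gate followed by $\QEC$ on each output block; because classical wires are unavailable, the recovery inside $\QEC$ is applied coherently, controlled on the syndrome register rather than on measurement outcomes.

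The core lemma is the quadratic error-suppression estimate: a level-$1$ rectangle is \emph{correct} whenever at most one of its $O(1)$ physical fault locations experiences a fault. Under i.i.d.\ depolarizing noise at rate $\eta$, a union bound over pairs of fault locations inside a rectangle gives a logical failure rate $\eta_1 \leq c\,\eta^{2}$, where $c$ depends only on the base code and the gadgets. To handle faults that straddle neighboring rectangles I would invoke the extended-rectangle formalism of Aliferis--Gottesman--Preskill, which reduces correctness of a composition of gadgets to the disjoint event that no two physical faults lie inside any single extended rectangle, so that independence of faults can be applied level by level.

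Iterating across $k$ concatenation levels produces the recursion $\eta_{j+1} \leq c\,\eta_j^{2}$, equivalently $c\,\eta_k \leq (c\,\eta)^{2^{k}}$. Setting $\eta_{\thresh} := 1/(2c)$ then yields, for $\eta < \eta_{\thresh}$, a level-$k$ logical error rate
\begin{equation*}
\eta_k \;\leq\; \tfrac{1}{c}\,2^{-2^{k}} \;=\; \exp\!\bigl[-\exp(\Omega(k))\bigr].
\end{equation*}
The encoded circuit $\enc^{k}[\circuit]$ contains at most $\poly(n)\cdot \exp(O(k))$ level-$k$ rectangles, so a union bound gives overall logical error probability at most $\poly(n)\,\exp[-\exp(\Omega(k))]$, because the doubly-exponential decay dominates the singly-exponential growth in $k$. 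The main obstacle is not the combinatorial recursion itself, which is textbook, but verifying that a base QECC and its gadgets genuinely exist under the combined constraints imposed by the PEPS embedding---1D geometry, no measurements, purely unitary evolution. These restrictions tend to worsen the numerical value of $\eta_{\thresh}$ and narrow the admissible codes, but do not alter the doubly-exponential decay structure; the references above establish that suitable constructions are available.
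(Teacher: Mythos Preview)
The paper does not supply its own proof of this statement: it is quoted as the threshold theorem of \cite{Aharonov1999}, preceded only by an informal paragraph sketching concatenation and noting the restrictions imposed by the PEPS embedding (1D geometry, no measurements, no classical wires), with pointers to \cite{Aharonov1999,Gottesman2000,Svore2005} for the existence of suitable gadgets. Your proposal is a correct and standard fleshing-out of exactly that sketch---base code, extended rectangles, the recursion $\eta_{j+1}\le c\,\eta_j^2$, and the union bound over $\poly(n)\exp(O(k))$ rectangles---so it aligns with the paper's approach rather than diverging from it.
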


For post-selection and local measurement readout, we will need to decode from the concatenated QECC. Letting $D$ denote the decoding circuit for a single logical qubit with one layer of quantum error correction, the fault-tolerant decoding circuit for a logical qubit encoded at the $k$th level is $D_k\coloneqq D\circ \enc[D]\circ\dots\circ \enc^{k-1}[D]$. Under a noise rate $\eta < \eta_{\thresh}$, the probability of a logical fault entering at $\enc^{j}[D]$ is doubly exponentially small in $j$, since the $j$ layers of concatenation suppress the error rate doubly exponentially whilst the circuit size only increases exponentially in $j$. Hence the probability that a logical fault occurs anywhere in the circuit $D_k$ can be bounded by $\O(\delta)$, independently of $k$. When decoding multiple qubits, since our noise model consists of local depolarising noise and qubit decoding circuits are independent of one another (in the tensor network, they can be separated by completely depolarising tensors---i.e.~an identity map from the virtual to physical space---which allows no correlations to pass through), the decoding errors affecting different qubits are independent of one another.

\begin{lem}[Concatenated decoding of errors]\label{lem:decoding error}
	Let $\circuit$, $k$, and $\eta$ be as in \cref{lem:threshold theorem}. Then for the noisy encoded circuit followed by decoding $\big[D_k^{\otimes n} \circ \enc^k[\circuit] \big]_\eta$, the probability that a decoding error occurs on a given qubit is $\O(\eta)$, independently of the errors on other logical qubits.
\end{lem}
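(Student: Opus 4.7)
The plan is to decompose $D_k$ into its $k$ decoding layers and apply the threshold theorem (\cref{lem:threshold theorem}) to each layer separately. In the composition $D_k = D \circ \enc[D] \circ \cdots \circ \enc^{k-1}[D]$, the step $\enc^j[D]$ takes a level-$(j+1)$ qubit to a level-$j$ qubit via a physical circuit of size $\exp(\O(j))$ that is itself the $j$-fold concatenated implementation of the constant-size decoding $D$. Fixing $j$ and applying the threshold theorem to this sub-circuit bounds the probability of a level-$j$ fault being introduced by $\enc^j[D]$ by $\O((\eta/\eta_{\thresh})^{2^j})$ — this is the quantitatively sharp form of the doubly-exponential suppression, and is what is needed to recover an $\O(\eta)$ total in the end. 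Any such level-$j$ fault then propagates all the way to the final level-$0$ output, since the remaining decoding steps $\enc^{j-1}[D], \ldots, D$ do not correct errors at level $j$: they are themselves decodings that only peel off lower levels of the concatenation.

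Summing over $j = 0, 1, \ldots, k-1$ gives the per-qubit decoding error probability
\begin{equation*}
    \sum_{j=0}^{k-1} \O\mleft((\eta/\eta_{\thresh})^{2^j}\mright) = \O(\eta),
\end{equation*}
which is uniform in $k$, because for $\eta < \eta_{\thresh}$ the series is dominated by its $j = 0$ term. Residual errors inherited from $\enc^k[\circuit]$ itself are already bounded by \cref{lem:threshold theorem} as $\poly(n)\exp(-\exp(\Omega(k)))$ and are absorbed into the $\O(\eta)$ bound.

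For the independence statement, observe that $D_k^{\otimes n}$ is a tensor product of $n$ disjoint single-qubit decoding circuits, each acting on its own block of $\exp(\O(k))$ physical qubits. Combined with the local depolarizing noise model and the completely depolarizing tensors separating the decoding blocks (as already noted in the surrounding text), distinct logical qubits see independent noise processes and therefore have independent decoding errors. The main technical point, and likely the main obstacle, is the careful inductive application of the threshold theorem to each $\enc^j[D]$ considered as a stand-alone encoded gadget rather than as the full encoded computation addressed by \cref{lem:threshold theorem}; this mirrors the standard recursive argument for concatenated codes but requires verifying that the restrictions imposed by the PEPS embedding (e.g.\ the absence of noiseless classical wires and mid-circuit measurements) do not invalidate the per-level error bounds.
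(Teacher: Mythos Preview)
Your proposal is correct and follows essentially the same argument as the paper: the paper likewise decomposes $D_k$ layer by layer, bounds the probability of a logical fault at $\enc^j[D]$ by the doubly-exponential suppression from $j$ levels of concatenation (against the merely exponential circuit size), and sums to obtain an $\O(\eta)$ bound uniform in $k$; independence is argued exactly as you do, via the tensor-product structure of $D_k^{\otimes n}$, the locality of the depolarising noise, and the separating completely-depolarising tensors. Your write-up is in fact more quantitatively explicit (giving the $(\eta/\eta_{\thresh})^{2^j}$ form and the resulting sum) than the paper's sketch, but the underlying idea is identical.
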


\subsection{Fault-tolerant postselected quantum computation}

In this section we explain how to establish a fault-tolerant threshold for postselected quantum computation under constant independent depolarising noise, which is a key technical step in our hardness proof for weakly injective PEPS in light of \cref{sec:noisy circuits}. This may be of independent interest; the question of whether postselected quantum computations can be made noise resilient has previously been raised in the context of complexity theory \cite{aaronson2004quantum,Aaronson2005,abrams1998nonlinear}, and has applications to computational models involving closed timelike curves \cite{bennett2009can,lloyd2011closed,leung2013computational} which we elaborate on in \cref{sec:ctc}. A closely related work is \cite{Fujii2016}, which studies fault-tolerance for a model with noisy circuits but exact postselection, motivated by random circuit sampling problems.

We begin by formalising our definition of postselected computation by recapping the definition of the complexity class $\postBQP$. This class was introduced by Ref.~\cite{Aaronson2005} and shown to equal $\PP$.
\begin{dfn}[$\postBQP$]\label{def:postbqp}
	A decision problem is in $\postBQP$ if it can be solved with a family of uniform polynomial-sized quantum circuits with post-selection. That is, the unitary circuits $U$ have two distinguished output qubits such that
	\begin{itemize}
		\item on all valid inputs, the first qubit has a nonzero probability of being measured in the $\ket{0}$ state, and
		\item after postselecting onto this outcome, the second output qubit will be measured in the $\ket{1}$ state with probability $\leq 1/3$ on all rejecting inputs, and the $\ket{1}$ state with probability $\geq 2/3$ on all accepting inputs.
	\end{itemize}
\end{dfn}

This ideal construction is sketched in \cref{fig:postbqp circuits}(a), where the circuit $U$ is formally assumed to encode the input to the decision problem and takes as input $n$ logical qubits in the $\ket{0}$ state. We will show how such a computation can be made robust, provided that noise is below some constant threshold $\eta < \eta_{\thresh}$. We first explicitly define our noise model.

\begin{dfn}[Noisy postselected quantum circuit]
    We consider postselected quantum circuits under the model of noise given by \cref{eq:noisy-map} --- that is, every unitary gate is subjected to depolarising noise at rate $\eta > 0$, whilst postselection gates $\rho\mapsto \bra{0} \rho \ket{0}$ are replaced by noisy postselection
    \begin{align}
        \rho \mapsto (1-\eta) \bra{0} \rho \ket{0} + \eta \tr[\rho]\ .
    \end{align}
\end{dfn}

\begin{lem}[Fault-tolerant threshold for $\postBQP$]\label{lem:postbqp threshold}
    Let $\mathcal{C}$ be a quantum circuit on $n$ qubits and with $\poly(n)$ gates with one postselection register and one designated output register, as in \cref{def:postbqp} (also see \cref{fig:postbqp circuits}(a)). Then, provided that $\eta < \eta_{\thresh}$, there is a corresponding circuit $\tilde{\mathcal{C}}$ on $\poly(n)$ qubits with $\O(n)$ postselection registers and one designated output register, such that the output of $[\tilde{\mathcal{C}}]_\eta$ (including noisy postselection) is within trace distance $\O(\eta)$ of the output of $\mathcal{C}$, independently of $n$. The conclusion also holds when the circuit $\tilde{\mathcal{C}}$ is restricted to 1-dimensional geometry without noiseless classical wires or mid-circuit measurements.
\end{lem}

\begin{proof}
    \begin{figure}[tb]
        \centering
        \scalebox{.78}{
            \begin{tikzpicture}[thick]
                \centering
                \node at (1.1,4.5) {\textbf{(a) Postselected quantum computation}};
                \node at (12,4.5) {\textbf{(b) Fault-tolerant postselected quantum computation}};
                \unencodedCircuit
                \begin{scope}[shift={(8,-5.5)}]
                    \encodedCircuit
                \end{scope}
            \end{tikzpicture}
        }
        \caption{(a) Ideal postselected quantum computation. A circuit consisting of $\poly(n)$ gates acts on $n$ qubits. The first output qubit is postselected on the $\ket{0}$ state, and the second output qubit is measured in the computational basis. (b) Fault-tolerant circuit for noisy postselected computation. $n$ logical qubits are encoded via $k$-fold concatenation of a quantum error-correcting code, and given as input to a fault-tolerant version of the unitary $U$. Moreover, $m-1$ additional qubits are also encoded and the postselection register is ``copied'' onto them using logical $\CNOT$ gates. The resulting $m$ postselection qubits are the noisily decoded and projected onto the $\ket{0}$ state. Meanwhile, the output qubit is decoded and measured.
        }
        \label{fig:postbqp circuits}
    \end{figure}
    Let $U$ be the unitary on $n$ qubits corresponding to the circuit $\mathcal{C}$ as above. The first output qubit is designated as the postselection register, hence we write the circuit output as
    \begin{align*}
        U \ket{0}^{\otimes n} = \alpha_0 \ket{0} \otimes \ket{\psi_0} + \alpha_1 \ket{1} \otimes \ket{\psi_1},
    \end{align*}
    where $\ket{\psi_0}$ and $\ket{\psi_1}$ are normalised states. Crucially, we may assume that $\alpha_0 = 2^{-\O(n)}$~\cite{Aaronson2005}, which below will set the scale for errors in the fault-tolerant computation.
    The postselected state is $\ket{\psi_0}$, upon which we consider the expectation of an observable $O$, $\|O\|=1$ acting on the first qubit,
    \begin{align*}
        x = \braket{\psi_0 | (O \otimes \id ) | \psi_0}.
    \end{align*}
    Our aim is to design a fault-tolerant circuit such that the measurement of $O$ on its output qubit is within $\O(\eta)$ of $x$.
    A problem that arises in the noisy setting is that the available projection gates are not perfect and can suppress the unwanted amplitude only by a constant factor $\eta$ (see \cref{eq:noisy-map}).
    To be able to project down to a level set by the amplitude $\alpha_0=\exp(-\O(n))$, we classically copy the data on the postselection register to $m-1$ additional qubits ($m=\O(n)$) by applying $\CNOT$ gates after the circuit $U$.
    This yields a circuit $\tilde U$ with ideal output
    \begin{equation}
        \ket{\Psi}\coloneqq \alpha_0 \ket{0}^{\otimes m} \otimes \ket{\psi_0} + \alpha_1 \ket{1}^{\otimes m}\otimes \ket{\psi_1}.
        \label{eq:Psi}
    \end{equation}

    Next, we fix $k \geq 1$, and consider the encoded version of $\tilde U$ under a $k$-fold concatenated QECC.
    Following the conclusions of \cref{sec:fault tolerance}, if the noise is below threshold, $\eta < \eta_{\thresh}$, the output of this circuit is correct with a probability $p_{\mathrm{good}}=1-p_{\mathrm{bad}}$, where \cref{lem:threshold theorem} guarantees that $p_{\bad} \leq \poly(n) \exp[-\exp(\Omega(k))]$.
    Thus, we can write the output state after (i) running $\tilde U$ fault tolerantly, (ii) decoding the output qubit and the $m$ postselection qubits using the decoding channel $D_k$ and (iii) tracing the $n-1$ extraneous qubits as (see also the illustration \cref{fig:postbqp circuits}(b))
    \begin{equation}\label{eq:good vs bad}
        \rho_{\out} = p_{\good} \rho_{\mathrm{good}}+ p_{\bad} \rho_{\bad},\qquad
        \rho_{\mathrm{good}} = \left[\V^{\otimes m}\otimes \V \otimes \tr_{n-1} \right] (\proj{\Psi}),
    \end{equation}
    where the channel $\V$ corresponds to the decoding errors on a single qubits, which can be constructed to be all the same and independent (\cref{lem:decoding error}).
    Let $c(\eta)=\O(\eta)$ (independent of $n,m,\tilde U$), such that we can decompose the single-qubit error channels as $\V=(1-c(\eta))\mathcal I+c(\eta)\Delta$, where $\mathcal I$ is the identity channel, and $\Delta$ is some other channel.
    
    We now consider the effect of projecting the $m$ projection qubits using the noisy projection channel $\Phi_\eta[\cdot]$ from \cref{eq:noisy-map}, followed by tracing them out.
    This process is characterized through
    \begin{subequations}
        \begin{align}
            \tr[\Phi_\eta\circ\V(\proj0)] &=(1-\eta)[(1-c(\eta))+c(\eta)\bra0\Delta(\proj0)\ket0]+\eta \eqqcolon 1-\eps_0,\\
            \tr[\Phi_\eta\circ\V(\proj1)] &=(1-\eta)c(\eta)\bra0\Delta(\proj1)\ket0+\eta \eqqcolon \eps_1,\\
            \tr[\Phi_\eta\circ\V(\ket0\!\bra1)] &= \tr[\Phi_\eta\circ\V(\ket1\!\bra0)] = (1-\eta)c(\eta)\bra0\Delta(\ket0\!\bra1)\ket0 \eqqcolon \eps_{01}.
        \end{align}
        \label{eq:projection-errors}
    \end{subequations}
	using in the last line that $\tr[\Delta(\ket0\!\bra1)] =\tr[\Delta(\ket1\!\bra0)]  = 0$.
    In the following, we pick $\eta<\eta_0$ small enough such that the errors obey the inequality $|\eps_0|,|\eps_1|,|\eps_{01}|< 1/4$.
    This allows us to express the state of the output qubit (\cref{eq:good vs bad}) after noisy postselection on all $m$ ancillary qubits as
    \begin{equation}\label{eq:reduced state output qubit}
        \begin{aligned}
              \rho_{\mathrm{result}}&=\left[ \left( \tr\circ \Phi_\eta \right)^{\otimes m} \otimes \mathcal I \right] \rho_{\mathrm{out}} \\
             &= p_{\mathrm{good}}\left( |\alpha_0|^2 (1-\eps_0)^m\V(\rho_0) + (\eps_1^m+2\mathrm{Re}(\eps_{01}^m))\rho_1 \right)
            +p_{\mathrm{bad}}\rho_{\mathrm{bad}}^{\mathrm{proj}},
        \end{aligned}
    \end{equation}
    where $\tr[\rho_{\mathrm{bad}}^{\mathrm{proj}}]\leq1$, $\tr[\rho_1]\leq1$, and $\rho_0=(\mathcal I \otimes\tr_{n-1})(\proj{\psi_0})$ is the (normalized) state of the output qubit for a noiseless circuit with perfect projection.
    Note that $\rho_{\mathrm{result}}$ is subnormalized, and we write its trace as $\tr[\rho_{\mathrm{result}}]=\N+\N_\eps$, where
    \begin{subequations}
		\begin{align}
			\N&\coloneqq p_{\mathrm{good}}|\alpha_0|^2 (1-\eps_0)^m,\\
			\N_\eps&\coloneqq \tr[p_\mathrm{good}(\eps_1^m + 2\mathrm{Re}(\eps_{01}^m)\rho_1+p_\mathrm{bad}\rho_\mathrm{bad}^\mathrm{proj}]
			\leq4^{-(m-1)}+\poly(n)\exp[-\exp(\Omega(k))].
		\end{align}
		\label{eq:trace-rho}
    \end{subequations}
	Since $\N=2^{-\O(n)}(3/4)^m$, we have that
	\begin{equation}
		\N_\eps/\N=2^{\O(n)}\left\{3^{-m}+\poly(n)\exp[-\exp(\Omega(k))]\right\},
		\label{eq:N-eps}
	\end{equation}
	which can be made small by choosing $k=\Omega(\log n)$ and $m=\Omega(n)$.

    Let us now compute a normalized expectation value on $\rho_{\mathrm{result}}$
	\begin{equation}
    \begin{aligned}
		\frac{\tr[O\rho_{\mathrm{result}}]}{\N+\N_\eps}&=\tr[O \mathcal V(\rho_0)] + \O(\N_\eps / \N)\\
        &= [1-c(\eta)]\tr[O\rho_0] + c(\eta)\tr[O\Delta(\rho_0)]+\O(\N_\eps/\N).
		\label{eq:expect}
    \end{aligned}
	\end{equation}
    With the above choices of $k$ and $m$, the latter term is exponentially small in $n$, and the right-hand side reduces to $\tr[O\rho_0] + \O(\eta)$. Since $\|O\|=1$ was arbitrary, we can conclude that
    \begin{align}
        \bigg\| \frac{\rho_{\mathrm{result}}}{N + N_\epsilon} - \rho_0 \bigg\|_1 = \O(\eta)\ ,
    \end{align}
    as required.
\end{proof}

\subsection{Local expectation value is \postBQP-complete}

\begin{thm}[Hardness of injective PEPS]\label{thm:hardness injective peps}
	There exists a constant $\delta_{\hard} > 0$ such that the decision version of $\nlev$ is $\postBQP$-complete in $\delta$-injective PEPS for $\delta < \delta_{\hard}$.
\end{thm}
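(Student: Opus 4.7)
The theorem has two directions. Containment in \postBQP{} is inherited from the argument for general PEPS~\cite{Schuch2007}: a \postBQP{} machine prepares the link state $\ket{\Phi}$ of \cref{eq:link state}, applies the Stinespring dilations of the local tensors $T_x$ to produce $\bigotimes_x T_x\ket{\Phi}$ together with a ``normalization-succeeded'' ancilla register, postselects on that ancilla, and then uses a standard Aaronson-style gadget~\cite{Aaronson2005} to extract the ratio $\bra{\Psi}O_X\ket{\Psi}/\braket{\Psi|\Psi}$ to constant additive precision; injectivity plays no role in the upper bound.

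For hardness, the plan is to embed an arbitrary $\postBQP$ computation into a $\delta$-injective PEPS by combining the noisy-circuit construction of \cref{sec:noisy circuits} with the fault-tolerance tools of \cref{sec:fault tolerance}. Fix a $\postBQP$ language $L$ decided by a polynomial-size circuit $\circuit$ whose output bit, conditioned on a projective gate $\proj{0}$ succeeding, is either at least $2/3$-biased toward $1$ or at most $1/3$-biased. First I would rewrite $\circuit$ as a one-dimensional circuit of nearest-neighbour unitaries, resets, and projection gates (at polynomial overhead via swap gates, as required by the PEPS embedding). Next I would replace it by the concatenated fault-tolerant version $\enc^k[\circuit]$ for $k=\Theta(\log n)$. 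I would then embed $\enc^k[\circuit]$ into a two-dimensional $\delta$-injective PEPS using \cref{eq:Tv}, choosing $\delta$ so that the induced depolarizing rate $\eta = 4\delta^2/(1+3\delta^2)$ lies below the fault-tolerance threshold $\eta_{\thresh}$, and declaring $\delta_{\hard}$ to be the corresponding value. Finally, append the decoders $D_k$ to the logical output qubits and take $O_X = \proj{1}$ on the relevant decoded physical qubit. By \cref{eq:Phi-delta} the numerator and denominator of $\langle O_X\rangle$ then coincide, up to the trivial overall factor $(1+3\delta^2)^{N}$ that cancels in the ratio, with the noisy joint probability of acceptance-and-postselection and the noisy postselection probability, respectively.

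The main obstacle—and the technical contribution flagged in the introduction—is verifying that fault tolerance actually preserves the \postBQP{} promise gap after conditioning on an exponentially rare event. \Cref{lem:threshold theorem,lem:decoding error} only bound the total variation distance between ideal and noisy output distributions by $\poly(n)\exp(-\exp(\Omega(k)))$, whereas a \postBQP{} postselection probability can be as small as $2^{-\poly(n)}$, so a naive TV bound is too weak. The decisive quantitative step is to take $k = C\log n$ for a sufficiently large constant $C$: this drives the additive joint-distribution error to $\exp(-n^{\Omega(1)})$, negligible against any $2^{-\poly(n)}$ postselection weight, while the encoded circuit still has polynomial size $\poly(n)\exp(\O(k)) = \poly(n)$. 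A short triangle-inequality calculation then shows that the ratio $\langle O_X\rangle$ inherits the $1/3$-versus-$2/3$ separation of $L$ up to $o(1)$ error, so approximating $\langle O_X\rangle$ to constant additive precision decides $L$ and completes the reduction.
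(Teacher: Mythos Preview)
Your proposal has a genuine gap at exactly the point the paper flags as its technical contribution: the treatment of the projection gate. You write $\circuit$ as a circuit containing a projection, form $\enc^k[\circuit]$, and then invoke \cref{lem:threshold theorem} to claim the additive error is $\exp(-n^{\Omega(1)})$. But the threshold theorem is stated for (trace-preserving) circuits; it says nothing about how a non-CPTP projector is to be encoded, and you never specify which physical tensors realise the ``encoded projection.'' In the PEPS embedding of \cref{sec:noisy circuits} the only available projection is the physical channel $\Phi_\eta$ of \cref{eq:noisy-map}, which has a constant leak: $\tr[\Phi_\eta(\proj{1})]=\eta$. However you route it---decode first and then project, or project a single ancilla carrying the logical $\bar Z$ outcome---you end up applying one noisy physical projection, and the unwanted $\ket{1}$ branch survives with weight $\Theta(\eta)$. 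Since the ideal postselection weight is $|\alpha_0|^2 = 2^{-\Omega(n)}$, the leak dominates the denominator $\braket{\Psi|\Psi}$ by an exponential factor, and the ratio $\langle O_X\rangle$ is governed by the depolarised garbage, not by $\ket{\psi_0}$. Increasing $k$ does nothing here: concatenation suppresses errors in the \emph{unitary} gadgets, but the decoding error (\cref{lem:decoding error}) and the physical projection leak are both $\Theta(\eta)$, independent of $k$. Your sentence ``this drives the additive joint-distribution error to $\exp(-n^{\Omega(1)})$'' is therefore false once the joint distribution includes the postselection outcome.

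The paper's fix is precisely to amplify the projection: before encoding, copy the postselection bit to $m-1$ fresh logical ancillas via $\CNOT$s (a unitary operation, so \cref{lem:threshold theorem} applies), decode all $m$ copies, and apply $m$ independent noisy physical projections. The leak onto the wrong branch is then $\Theta(\eta^m)$, and choosing $m=\Omega(n)$ together with $k=\Omega(\log n)$ makes both $\eta^m$ and $p_{\bad}$ negligible against $|\alpha_0|^2$, as in \cref{eq:trace-rho,eq:N-eps}. This repetition-code-on-top-of-QECC trick is the missing ingredient in your argument; without it the reduction does not go through.
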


This theorem is now a simple consequence of \cref{lem:postbqp threshold}, using the embedding of noisy postselected circuits into injective PEPS given by \cref{sec:noisy circuits}.

\begin{proof}
    The containment of $\nlev$ within $\postBQP$ is an immediate consequence of the $\postBQP$-completeness of contracting general PEPS \cite{Schuch2007}. 
    For hardness, consider a general $\postBQP$ problem instance corresponding to a unitary circuit $U$ as in \cref{def:postbqp}. Using \cref{lem:postbqp threshold}, provided that $\eta < \eta_{\thresh}$, we can construct a corresponding circuit whose postselected output, even under the presence of noise, is within $\O(\eta)$ of the ideal output. We set $\eta_1 > 0$ to be a constant such that the output error is less than $1/3$, so that the promise gap of the $\postBQP$ problem is not closed. Thus, the circuit can be embedded into a $\delta$-injective PEPS for $\delta > 0$ as in \cref{sec:noisy circuits} such that the $\postBQP$ problem instance is decided by evaluating a normalised local observable. Quantitatively, defining $\eta_{\hard} := \min\{\eta_{\thresh},\eta_1\}$, we obtain (by inverting \cref{eq:eta}) the threshold 
    \begin{align}
        \delta_{\hard} := \sqrt{\frac{\eta_{\hard}}{4-3\eta_{\hard}}} > 0\ .
    \end{align}
    
\end{proof}

\subsection{Postselection from closed timelike curves}\label{sec:ctc}

\begin{figure}
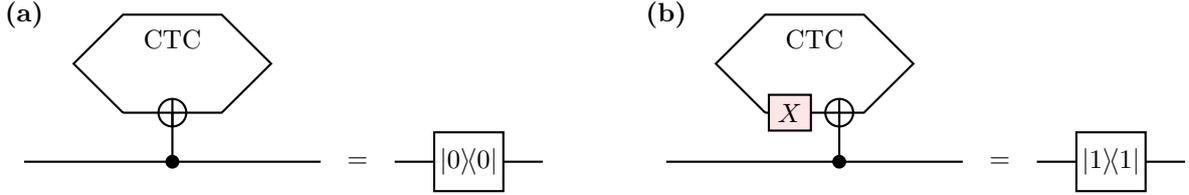

    \centering
	\CTCFig
    \caption{(a) Circuit for a postselection gate onto the $\ket{0}$ state using a closed timelike curve (CTC). (b) A Pauli $X$ error on the CTC corresponds to incorrectly postselecting onto $\ket{1}$; $Y$ and $Z$ errors can be ignored, as these result in zero amplitude over the CTC.}
    \label{fig:noisy ctc}
\end{figure}

In this section we mention a different application of \cref{lem:postbqp threshold}, motivated by the study of closed timelike curves (CTCs) \cite{bennett2009can,deutsch1991quantum,thorne1992closed}. A projected, or postselected, CTC \cite{lloyd2011closed} consists of two quantum systems (one of which is interpreted as travelling backwards in time) which are initialised in a maximally entangled state. The forwards-travelling system undergoes unitary evolution, possibly interacting with external systems, before the two halves of the CTC are projected back onto the maximally entangled state. This model is powerful enough to enable the solution of $\postBQP$ problems~\cite{lloyd2011closed}; \cref{fig:noisy ctc}(a) shows how a unitary $\CNOT$ gate acting on an ancillary closed timelike curve can realise postselection onto the $\ket{0}$ state of a qubit. 

It is natural to ask what happens when the CTC model is subjected to noise \cite{leung2013computational} --- that is, when the unitary gates interacting with the CTC are performed noisily. As illustrated in \cref{fig:noisy ctc}(b), a Pauli $X$ error on the CTC before the $\CNOT$ gate causes the circuit to instead postselect onto $\ket{1}$. On the other hand, $Y$ and $Z$ errors annihilate the state. Hence the effect of depolarising noise applied to the CTC yields a noisy postselection channel of the form (up to normalisation)
\begin{align}
    \Phi_{CTC} : \rho \mapsto (1-q) \bra{0} \rho \ket{0}\cdot \proj{0} + q \bra{1}\rho\ket{1} \cdot \proj{1}.
\end{align} 
After tracing out this postselection qubit, this results in a CP map $\tr\circ \Phi_{CTC}$ equivalent to the noisy postselection model of \cref{sec:noisy circuits}. Hence by the same argument as \cref{lem:postbqp threshold}, one can construct a version of any $\postBQP$ circuit which is fault-tolerant under this noise (with an overhead of $m=\O(n)$ additional CTCs). This establishes that the computational power of projected CTCs is robust to noise.

We note that it may be possible to obtain an alternative proof of \cref{thm:hardness injective peps} by directly embedding a CTC into the injective PEPS, rather than decoding and postselecting, since the time direction of the embedded circuit is arbitrary and can contain loops. In this approach one should show that postselection via CTCs is not compromised by a large subspace of logically equivalent states, and that direction changes in the circuit can be performed fault-tolerantly; we do not pursue this direction here.

\subsection{Hardness of normalization in the uniform case}\label{sec:hardness uniform}
\Cref{thm:hardness injective peps} applies to non-uniform PEPS. One can also study the uniform case, where each vertex in the graph is assigned the same tensor (and we choose periodic boundary conditions). A way to prove hardness of $\normalization$ in the non-injective case is by encoding Wang tiling problems~\cite{Scarpa2020}. Specifically, consider a set of $t$ tiles with $D$ boundary colors. If we have $i = 1, \dots, t$ and the $i$-th tile has sides labeled by colors $a_i, b_i, c_i. d_i$, a tiling is an assignment of tiles to the vertices of the grid, such that the sides have matching colors.
We let $Z(n_1,n_2)$ denote the number of $n_1 \times n_2$ tilings with periodic boundary conditions. Computing this is a $\sharpp$-hard problem.
To this tiling problem we associate a tensor
\begin{equation}
	T = \sum_{i=1}^t \ket{i}\! \bra{a_i}\bra{b_i}\bra{c_i}\bra{d_i}.
\end{equation}
Then, the norm of the resulting PEPS is $Z(n_1,n_2)$. More details on this construction can be found in Ref.~\cite{Scarpa2020}.

One can use this to show that for tensors with constant injectivity, computing $\normalization$ to exponential (additive) precision remains $\sharpp$-hard. The argument is by polynomial extrapolation, as in Ref.~\cite{Haferkamp2020}. We sketch the proof.
One can relate the tiling problem to a problem with injective tensors by letting $T(\delta) = (1 - \delta) T + \delta I$ where $I$ is the maximally injective identity tensor. This gives a state $\ket{\Psi_{n_1,n_2}(\delta)}$ on the $n_1 \times n_2$ grid, and the PEPS has injectivity $\delta$. For $\delta = 0$ we get back the state counting tilings. The norm of $\ket{\Psi_{n_1,n_2}(\delta)}$ is a polynomial of degree $2n$ in $\delta$. If we have an algorithm computing the norm of the PEPS at $\Omega(n)$ evenly spaced values of $\delta$ for $\delta \in (\frac12, 1)$ to exponential precision, then standard results in approximation theory (see \cite{Haferkamp2020} for details) show that we can compute the norm of the state at $\delta = 0$ to exponential precision by polynomial extrapolation, so the problem is already hard at constant injectivity.

\section*{Acknowledgments}
We thank Ignacio Cirac for insightful discussions and Quynh Nguyen for helpful feedback.
DM acknowledges financial support by the Novo Nordisk Foundation under grant numbers NNF22OC0071934 and NNF20OC0059939.

\bibliographystyle{apsrev4-2-titles}
\bibliography{jabref,bibtemp}
\end{document}